\renewcommand{\Pr}{\mathsf{Pr}}
\newcommand{\E}{\mathsf{E}}
\newcommand{\V}{\mathsf{Var}}
\newcommand{\dd}{\mbox{d}}
\newcommand{\DP}{\mathsf{DP}}
\newcommand{\Gam}{\mathsf{Gam}}
\newcommand{\MED}{\mathsf{MED}}
\newtheorem{lemma}{Lemma}
\begin{document}

\title{Default Bayesian Analysis for the Multivariate Ewens Distribution}
\author{Abel Rodr\'{\i}guez}
\date{}

\begin{abstract}
We derive the Jeffreys prior for the parameter of the Multivariate Ewens Distribution and study some of its properties.  In particular, we show that this prior is proper and has no finite moments.  We also investigate the impact of this default prior on the a priori distribution of the number of species and the a priori probability of discovery of a new species, which are usually employed in subjective prior elicitation.  The effect of the Jeffreys prior for posterior inference is illustrated using examples arising in the context of inference for species sampling models and Dirichlet process mixture models.
\end{abstract}
\keywords{Multivariate Ewens Distribution; Jeffreys Prior; Dirichlet Process; Ewens Sampling Formula}

\maketitle

\section{Introduction}

The Multivariate Ewens Distribution (MED), also known as the Ewens Sampling Formula (ESF) \citep{Ew72,JoKoBa97}, is a probability distribution on the partitions of the set $\{ 1, 2, \ldots, n \}$.  It appears often in genetics as the distribution of the number of distinct alleles in a sample of size $n$ drawn from an infinite idealized population, or as the limiting distribution of other, more general models \citep{Ki78a,Ho87}. More generally, the MED belongs to the class of species sampling models, which describe distributions for exchangeable random partitions  \citep{Al85,Pi95,LiMePr07}.  The MED also appears in the context of Bayesian nonparametric statistics, where it is related to the number of unique values in a random sample of size $n$ taken from a random distribution that follows a Dirichlet process prior \citep{An74}.  Hence, in the context of Dirichlet process mixture models, the MED acts as the prior distribution on the number and size of the clusters imposed by model.

\smallskip

This paper is concerned with Bayesian estimation and prediction under the MED model.  Gamma priors, or mixtures thereof, are commonly used as priors in this context because of the existence of simple Gibbs sampling algorithms based on data augmentation \citep{EsWe95}.  Hyperparameters are elicited by either exploiting their link with the expected number of distinct alleles \citep{EsWe95} or, in the case of nonparametric mixture models, their link with the mean and variance of the observations \citep{WaMa97}.  Alternatively, \cite{CaPa02} and \cite{GrSt04} propose eliciting priors on the probabilities of new alleles, which in turn imply a prior on the parameter of interest.  In either case, elicitation can be difficult because of  the lack of relevant prior information in specific applications.  To deal with the lack of prior information, numerous authors have used very dispersed Gamma priors.  In this paper we derive the Jeffreys prior associated with the MED, show that this prior is proper, and investigate some of its properties.  Because of its invariance to transformations, the Jeffreys prior provides a natural ``default'' or ``non-informative'' alternative to the priors discussed above without a substantial increase in computational complexity.

\smallskip

The remaining of the paper is organized as follows:  Section \ref{se:MED} briefly reviews the Multivariate Ewens Distribution and derives the Jeffreys prior associated with its parameter.   Section \ref{se:properties} discusses some of the properties of the Jeffreys prior.  Section \ref{se:illus} presents some illustrations in the context of species sampling models and Dirichlet process mixture models.  We conclude in Section \ref{se:conclusion} with some brief remarks and future directions.

\section{The Jeffreys Prior for the Multivariate Ewens Distribution}\label{se:MED}

Consider a partition of the set $\{ 1, 2, \ldots, n \}$ into $K \le n$ subsets so that there are $r_j$ subsets of size $j$, where $\sum_{j=1}^{n} r_j = K$, and $\sum_{j=1}^{n} j r_j = n$.  For example, the $n$ elements of the original set might represent individuals being sampled from an infinite population, while the subsets into which they are divided could be interpreted as the species to which these individuals belong.  The Multivariate Ewens Distribution (MED) assigns such a partition a probability 
\begin{align}\label{eq:MED1}
p(K, r_1, \ldots, r_n \mid \beta) = \frac{\Gamma(\beta)\Gamma(n+1)}{\Gamma(\beta + n)} \,\beta^{K} \,   \prod_{j=1}^{n} \frac{1}{j^{r_j} \Gamma(r_j + 1)}  ,
\end{align}
where $0 < \beta < \infty$ is a parameter controlling the shape of the distribution.  

\smallskip

The partitions associated with the MED can alternatively be described in terms of a sequence of exchangeable indicators $\xi_1, \ldots, \xi_n$ such that $\xi_i = k$ if the $i$ individual in the population belong to species $k$.  Assuming that the species are labeled consecutively between at 1 and $K$,  and letting $m_k = \sum_{i=1}^{n} I(\xi_i = k)$ be the number of individuals in species $k$, then
\begin{align}\label{eq:MED2}
p(\xi_1, \ldots, \xi_n \mid \beta) = p(K, m_1, \ldots, m_K \mid \beta) = \frac{\Gamma(\beta)}{\Gamma(\beta + n)} \, \beta^{K} \,\prod_{k=1}^{K} \Gamma(m_k).
\end{align}

\smallskip

From \eqref{eq:MED2} we can compute the probability mass function associated with the species of a new individual
$$
p(\xi_{n+1} = k \mid \xi_{1}, \ldots, \xi_{n}, \beta) = \begin{cases}
\dfrac{ p(K, m_1, \ldots, m_k +1, \ldots, m_K \mid \beta) }{ p(K, m_1, \ldots, m_k, \ldots, m_K \mid \beta) } = \dfrac{m_k}{\beta + n}  & k \le K \\
& \\
\dfrac{ p(K+1, m_1, \ldots, m_k, \ldots, m_K, 1 \mid \beta) }{ p(K, m_1, \ldots, m_k, \ldots, m_K \mid \beta) } = \dfrac{\beta}{\beta + n} & k = K+ 1 . \\
\end{cases}
$$
This sequence of predictive distributions is sometimes called the Chinese restaurant process.

\smallskip

We are interested in estimating the parameter $\beta$ based on either an observed sample $\xi_1,\ldots,\xi_n$, or the sufficient statistic $K$.  
The following lemma provides an expression for a natural default prior for $\beta$.
\begin{lemma}
For $n \ge 2$, the Jeffreys prior associated with \eqref{eq:MED1} and \eqref{eq:MED2} is given by
\begin{align}\label{eq:jefpMED}
\pi_n^{J} (\beta) & \propto \sqrt{\frac{1}{\beta}\sum_{j=1}^{n-1}\frac{j}{(\beta+j)^2}}   .
\end{align}
\end{lemma}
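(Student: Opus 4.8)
The plan is to compute the Fisher information $I_n(\beta)$ and then set $\pi_n^J(\beta) \propto \sqrt{I_n(\beta)}$. Both \eqref{eq:MED1} and \eqref{eq:MED2} share the same dependence on $\beta$, namely the factor $\Gamma(\beta)\beta^K/\Gamma(\beta+n)$, so the information is identical for either representation and I will work with \eqref{eq:MED2}. Its log-likelihood is
\begin{align*}
\ell(\beta) = \log\Gamma(\beta) - \log\Gamma(\beta+n) + K\log\beta + \sum_{k=1}^{K}\log\Gamma(m_k),
\end{align*}
and since the final term is free of $\beta$, differentiating twice gives
\begin{align*}
\frac{\partial^2 \ell}{\partial\beta^2} = \psi'(\beta) - \psi'(\beta+n) - \frac{K}{\beta^2},
\end{align*}
where $\psi'$ is the trigamma function. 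Hence $I_n(\beta) = -\E\!\left[\partial^2\ell/\partial\beta^2\right] = \psi'(\beta+n) - \psi'(\beta) + \E[K]/\beta^2$, and the only probabilistic input required is the expected number of blocks $\E[K]$.

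To obtain $\E[K]$ I would exploit the Chinese restaurant process representation introduced above. Writing $K = \sum_{i=1}^{n} B_i$, where $B_i$ indicates that the $i$-th individual opens a new species, the predictive probabilities give $\Pr(B_i = 1 \mid \beta) = \beta/(\beta + i - 1)$, so that
\begin{align*}
\E[K \mid \beta] = \sum_{i=1}^{n}\frac{\beta}{\beta + i - 1} = \beta\sum_{j=0}^{n-1}\frac{1}{\beta + j}.
\end{align*}

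It then remains to simplify. Differentiating the digamma recurrence $\psi(x+1) = \psi(x) + 1/x$ yields $\psi'(x+1) = \psi'(x) - 1/x^2$, whence $\psi'(\beta+n) - \psi'(\beta) = -\sum_{j=0}^{n-1}(\beta+j)^{-2}$. Substituting this and the formula for $\E[K]$ into $I_n(\beta)$ gives
\begin{align*}
I_n(\beta) = \frac{1}{\beta}\sum_{j=0}^{n-1}\frac{1}{\beta+j} - \sum_{j=0}^{n-1}\frac{1}{(\beta+j)^2}.
\end{align*}
The $j=0$ terms cancel exactly, and each remaining summand collapses through the identity $\frac{1}{\beta(\beta+j)} - \frac{1}{(\beta+j)^2} = \frac{j}{\beta(\beta+j)^2}$, leaving $I_n(\beta) = \frac{1}{\beta}\sum_{j=1}^{n-1} j/(\beta+j)^2$; taking the square root produces \eqref{eq:jefpMED}. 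The main obstacle I anticipate is the evaluation of $\E[K]$, since everything else is mechanical once it is available; the predictive decomposition of $K$ into new-species indicators is the cleanest route. The hypothesis $n \ge 2$ enters only to guarantee that the sum is nonempty, so that $I_n(\beta) > 0$ and the prior is well defined.
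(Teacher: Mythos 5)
Your proposal is correct and follows essentially the same route as the paper: compute $-\E[\partial^2\ell/\partial\beta^2]$, substitute $\E[K\mid\beta]=\sum_{j=0}^{n-1}\beta/(\beta+j)$ and the trigamma recurrence $\psi'(\beta+n)=\psi'(\beta)-\sum_{j=0}^{n-1}(\beta+j)^{-2}$, and simplify termwise. The only difference is cosmetic: you derive $\E[K]$ from the Chinese-restaurant indicator decomposition and the trigamma identity from the digamma recurrence, whereas the paper cites these facts; your algebraic simplification (the $j=0$ cancellation and the identity $\frac{1}{\beta(\beta+j)}-\frac{1}{(\beta+j)^2}=\frac{j}{\beta(\beta+j)^2}$) matches the paper's.
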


\begin{proof}
By definition, $\pi_n^{J} (\beta)  \propto \left|\mathcal{I}(\beta)\right|^{1/2}$ where $\mathcal{I}(\beta) = -\E\left[\frac{\dd^2}{\dd\beta^2} \log \left\{ p (K, m_1, \ldots, m_K \mid \beta) \right\} \right]$ is the Fisher information associated with $\beta$. Now, 
\begin{align*}
- \E \left[\frac{\dd^2}{\dd \beta^2}  \log \left\{ p (K,m_1, \ldots, m_K \mid \beta) \right\} \right] &= - \psi'(\beta) + \psi'(\beta+n) + \frac{\E \{ K \} }{\beta^2}  ,
\end{align*}
where $\psi'$ denotes the trigamma function \citep{AbSt}.  Now, using the facts that $\E \{ K \} =  \sum_{j=0}^{n-1} \frac{\beta}{\beta + j}$ (e.g., see \citealp{An74}) and $\psi'(\beta+n) = \psi'(\beta) - \sum_{j=0}^{n-1}\frac{1}{(\beta+j)^2}$ (e.g., see \citealp{AbSt}) we get
\begin{align*}
\mathcal{I}_E(\beta) &= - \sum_{j=0}^{n-1}\frac{1}{(\beta+j)^2} + \frac{1}{\beta^2}\sum_{j=0}^{n-1} \frac{\beta}{\beta + j} = \frac{1}{\beta}\sum_{j=0}^{n-1}\frac{j}{(\beta+j)^2} = \frac{1}{\beta}\sum_{j=1}^{n-1}\frac{j}{(\beta+j)^2}  ,
\end{align*}
which directly leads to \eqref{eq:jefpMED}.
\end{proof}

In particular, note that, for $n=2$ (the smallest sample size containing information about $\beta$), the Jeffreys prior on $\beta$ corresponds to a standard Cauchy prior on $\nu = \beta^{1/2}$.

\section{Properties of the Jeffreys Prior for the MED}\label{se:properties}

Surprisingly, the Jeffreys prior is proper.  Indeed, note that for all $n \ge 2$
\begin{align*}
\int_{0}^{\infty} \sqrt{\frac{1}{\beta}\sum_{j=1}^{n-1}\frac{j}{(\beta+j)^2}} \; \dd \beta \le \int_{0}^{\infty} \sqrt{\frac{1}{\beta}\sum_{j=1}^{n-1}\frac{j}{(\beta+1)^2}} \; \dd \beta    & =  \sqrt{\frac{n(n-1)}{2}}  \int_0^\infty \frac{1}{\beta^{3/2}+\beta^{1/2}} \; \dd\beta \\ &=   \sqrt{\frac{n(n-1)}{2}} \int_{0}^{\infty} \frac{1}{u^2+1}\;\dd u = \pi \sqrt{\frac{n(n-1)}{2}}
\end{align*}

\smallskip

Although the normalizing constant $C(n) = \int_{0}^{\infty} \sqrt{\frac{1}{\beta}\sum_{j=1}^{n-1}\frac{j}{(\beta+j)^2}} \; \dd \beta$ is not available in closed form, it is easily evaluated numerically using quadrature methods.  The prior is decreasing, which can be easily verified, 
$$
\frac{\dd}{\dd \beta}\log\{ \pi_n^{J} (\beta) \}  = - \frac{1}{2\beta} - \frac{\sum_{j=1}^{n-1} \frac{j}{(\beta + j)^3}}{\sum_{j=1}^{n-1} \frac{j}{(\beta + j)^2}} < 0  .
$$

\smallskip

On the other hand, $\pi_n^{J}(\beta)$ is {\it log-convex}.  Indeed,
$$
\frac{\dd^2}{\dd \beta^2}\log\{ \pi_n^{J} (\beta) \}  = \frac{1}{2 \beta^2} + \frac{   3\left\{ \sum_{j=1}^{n-1} \frac{j}{(\beta + j)^4} \right\} \left\{ \sum_{j=1}^{n-1} \frac{j}{(\beta + j)^2} \right\} - 2\left\{ \sum_{j=1}^{n-1} \frac{j}{(\beta + j)^3} \right\}^2   }{ \left\{ \sum_{j=1}^{n-1} \frac{j}{(\beta + j)^2} \right\}^2 } > 0 .
$$

\smallskip

To show this, note that the first term is clearly positive over the support of the prior.  For the second term,
\begin{align*}
3\left\{ \sum_{j=1}^{n-1} \frac{j}{(\beta + j)^4} \right\} \left\{ \sum_{j=1}^{n-1} \frac{j}{(\beta + j)^2} \right\} - 2\left\{ \sum_{j=1}^{n-1} \frac{j}{(\beta + j)^3} \right\}^2 
  & = \sum_{j=1}^{n-1} \sum_{i=1}^{n-1} \left\{  \frac{ 3ij }{(\beta+j)^4 (\beta+i)^2} - \frac{ 2ij }{(\beta+j)^3 (\beta+i)^3} \right\} \\
  & = \sum_{j=1}^{n-1} \sum_{i=1}^{n-1} \left\{  \frac{ ij [3(\beta+i) - 2(\beta+j)] }{(\beta+j)^4 (\beta+i)^3}  \right\} \\
  & \ge \frac{1}{(\beta + n)^7} \sum_{j=1}^{n-1} \sum_{i=1}^{n-1} \left\{ ij \beta + 3i^2j - 2ij^2 \right\} \\
  & = \frac{1}{(\beta + n)^7} \left\{ \beta \sum_{j=1}^{n-1} \sum_{i=1}^{n-1} ij + \sum_{j=1}^{n-1} \sum_{i=1}^{n-1} i^2j \right\} > 0.
\end{align*}

\smallskip

Figure \ref{fi:jefpriorgraph} presents graphs of $\pi_n^{J}$ for different values of $n$.  
\begin{figure}[!h]
\centering  \includegraphics[width=.45\textwidth]{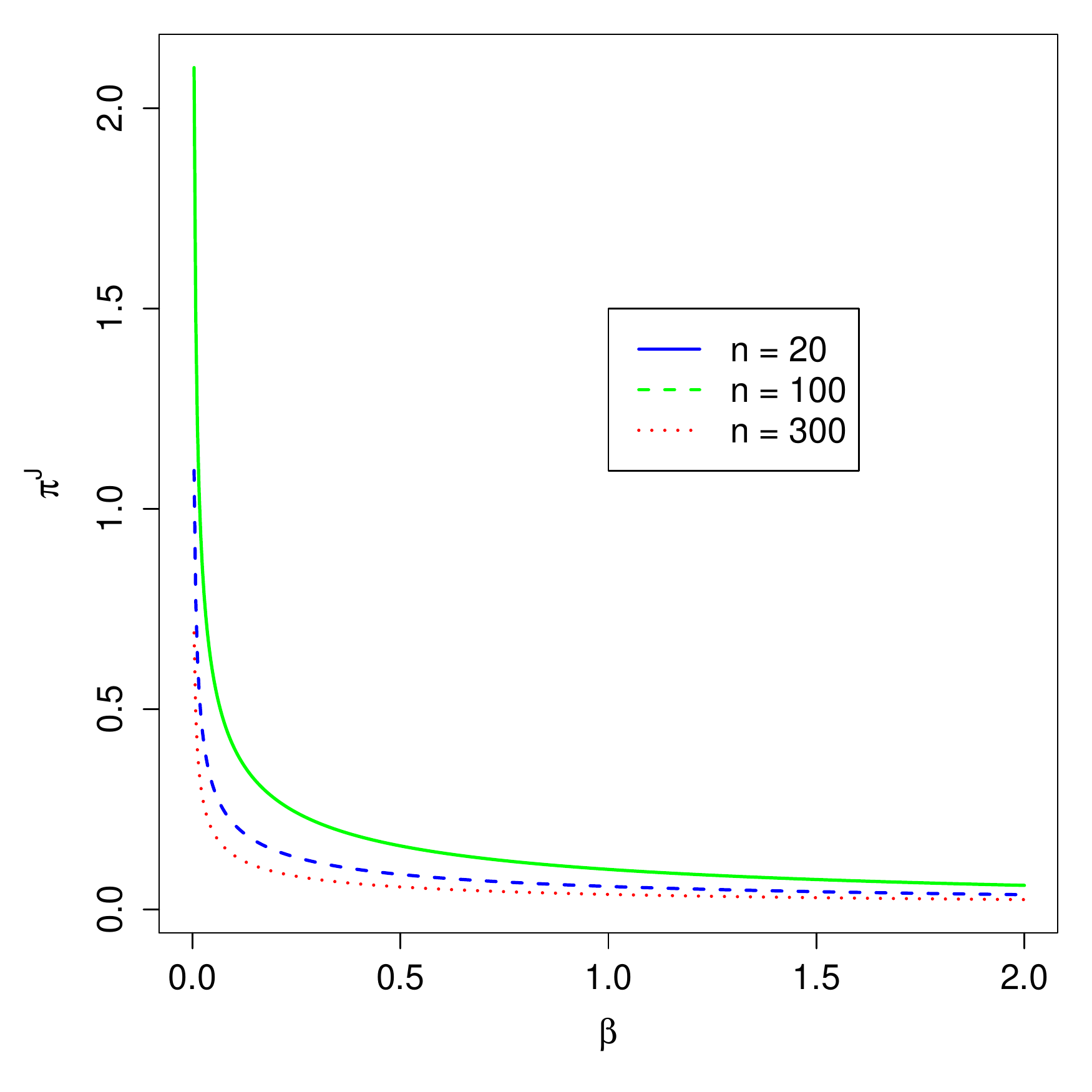}
\caption{Density of the Jeffreys prior associated with the MED for some selected values of $n$.} \label{fi:jefpriorgraph}
\end{figure}

\smallskip

The moments of $\pi_n^J(\beta)$ do not exist.  Indeed, 
\begin{align*}
\int_{0}^{\infty} \beta^{s} \sqrt{\frac{1}{\beta}\sum_{j=1}^{n-1}\frac{j}{(\beta+j)^2}} \; \dd \beta > \int_{0}^{\infty} \beta^{s-1/2} \sqrt{\sum_{j=1}^{n-1}\frac{j}{(\beta+n)^2}} \; \dd \beta    & =   \sqrt{\frac{n(n-1)}{2}} \int_{0}^{\infty} \frac{u^{2s}}{u^2+n}\;\dd u = \infty
\end{align*}
for any $s \ge 1$.  Instead, consider the median of $\pi_n^J(\beta)$ as a function of $n$.  Again, although no closed-form expression is available for this median, it can be computed numerically.  Figure \ref{fi:medianjefprior} suggests that the median grows more or less linearly with the sample size $n$, with $\mathsf{Med}\{ \beta \} \approx 0.36 n + 1$ for $n \le 400$.
\begin{figure}[!h]
\centering  \includegraphics[width=.45\textwidth]{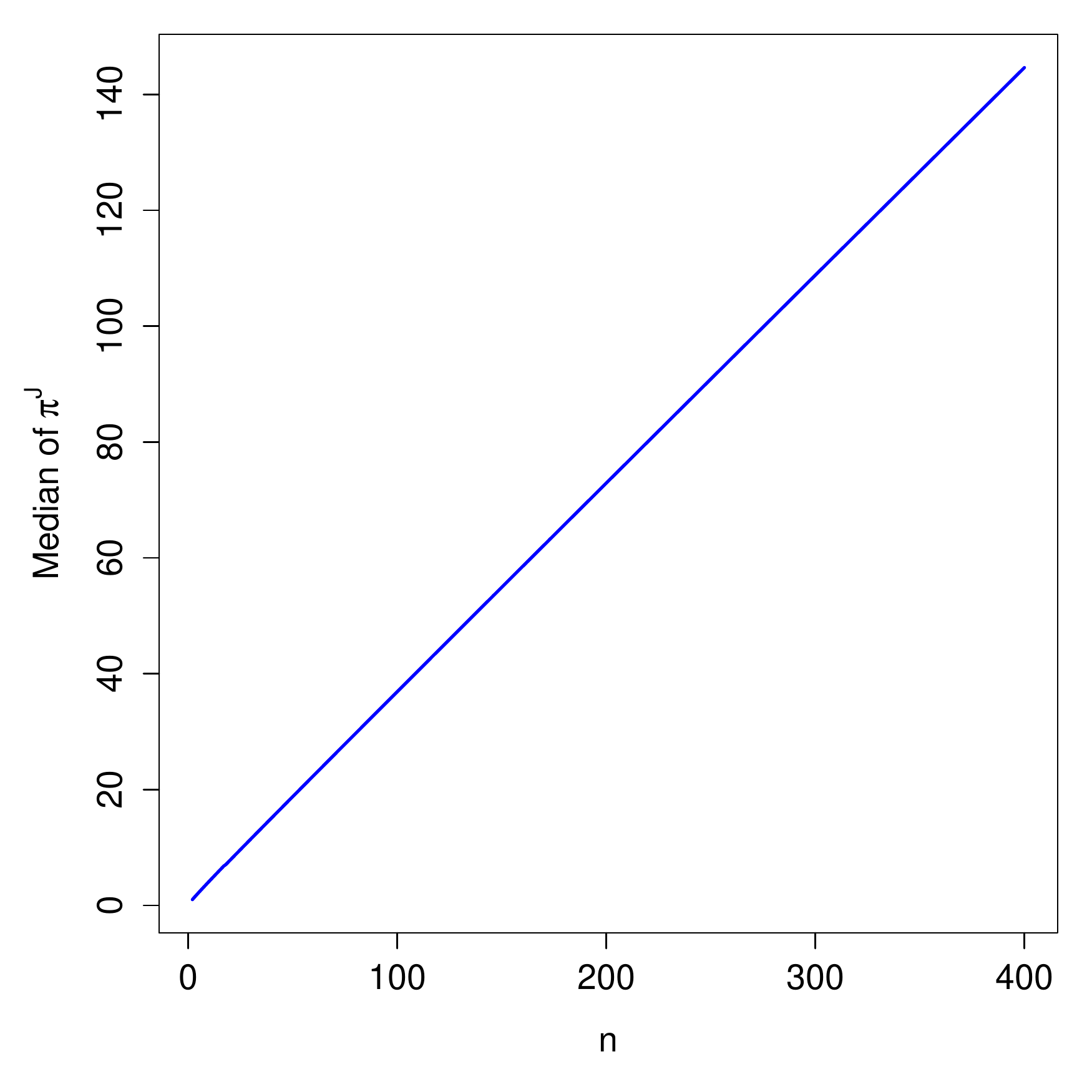}
\caption{Median of $\pi_n^{J}(\beta)$ as a function of the sample size $n$.} \label{fi:medianjefprior}
\end{figure}

\smallskip

Alternatively, we can consider the impact of the Jeffreys prior on the prior distribution of the number of species by computing prior moments such as the prior mean,
\begin{align*}
\E\{ K \mid n \} &= \E\{ \E(K \mid \beta, n) \}\int_{0}^{\infty} \left(\sum_{j=0}^{n-1} \frac{\beta}{\beta + j} \right) \pi_n^{J}(\beta) \dd \beta,   
\end{align*}
and the prior variance
\begin{align*}
 \V\{ K \mid n \} 
 & = \int_{0}^{\infty} \sum_{j=0}^{n-1} \left( \frac{\beta j}{\{ \beta + j \}^2} \right) \pi_n^{J}(\beta) \dd \beta + \int_0^{\infty}\left( \sum_{j=0}^{n-1} \frac{\beta}{\beta+j} \right)^2  \pi_n^{J}(\beta) \dd \beta - \left\{ \int_0^{\infty} \left( \sum_{j=0}^{n-1} \frac{\beta}{\beta + j} \right) \pi_n^{J}(\beta) \dd \beta \right\}^2.
\end{align*}

\smallskip

The left panel in Figure \ref{fi:expec_stdev} presents graphs for the prior mean and prior standard deviation as functions of $n$. Just like the median of $\pi_n^{J}(\beta)$, both of these quantities grow almost linearly with $n$.  Moreover, note that $\E\{ K \} \approx n/2$, and that the prior standard deviation grows somewhat more slowly than $\E\{ K \}$.  More generally, we can compute the marginal prior probability distribution on the number of species,
$$
\Pr(K = k \mid n) = \int \Pr(K = k \mid \beta, n) \pi_n^{J}(\beta) \dd \beta = |S(n,k)|  A(n,n,k)
$$
where
\begin{align}\label{eq:Ank}
A(n,m,k) &= \int_0^{\infty} \beta^{k} \frac{\Gamma(\beta)}{\Gamma(\beta + n)} \pi_m^{J}(\beta) \dd \beta.
\end{align}
and $|S(n,k)|$ is the absolute value of the Stirling number of the second kind.  For illustrative purposes, Figure \ref{fi:dist100} shows the marginal distribution $\Pr(K=k \mid n=100)$. The effect of the Jeffrey's prior is striking; the resulting distribution is U-shaped and roughly symmetric around $n/2$ (which is compatible with our observations about $\E(K \mid n)$).  Hence, under this prior the model prefers a priori either a very small or a very large number of species, with values around the mean/median actually having very low prior probabilities.
\begin{figure}[!h]
\centering  
{\includegraphics[width=.455\textwidth]{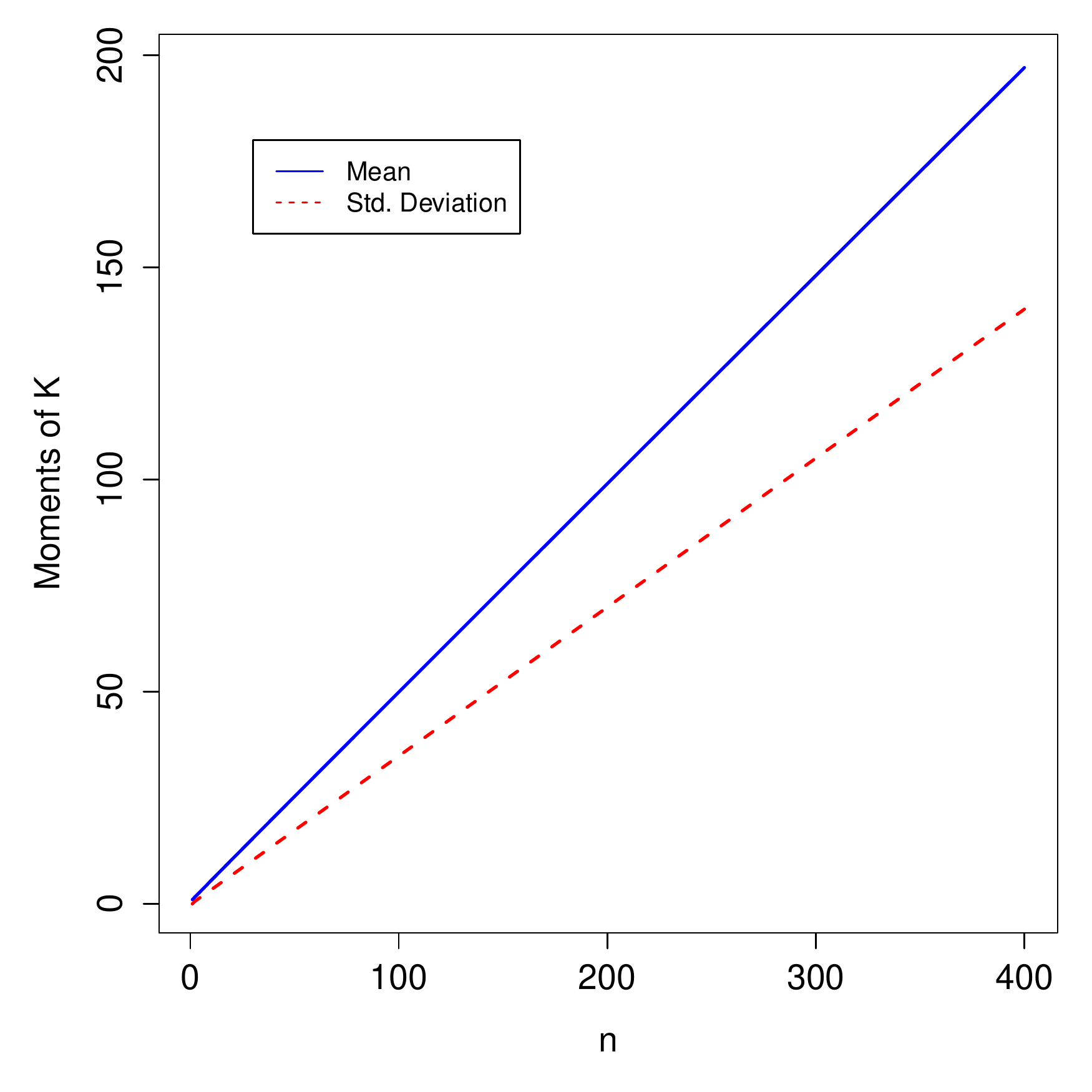} \label{fi:expec_stdev}}
{\includegraphics[width=.455\textwidth]{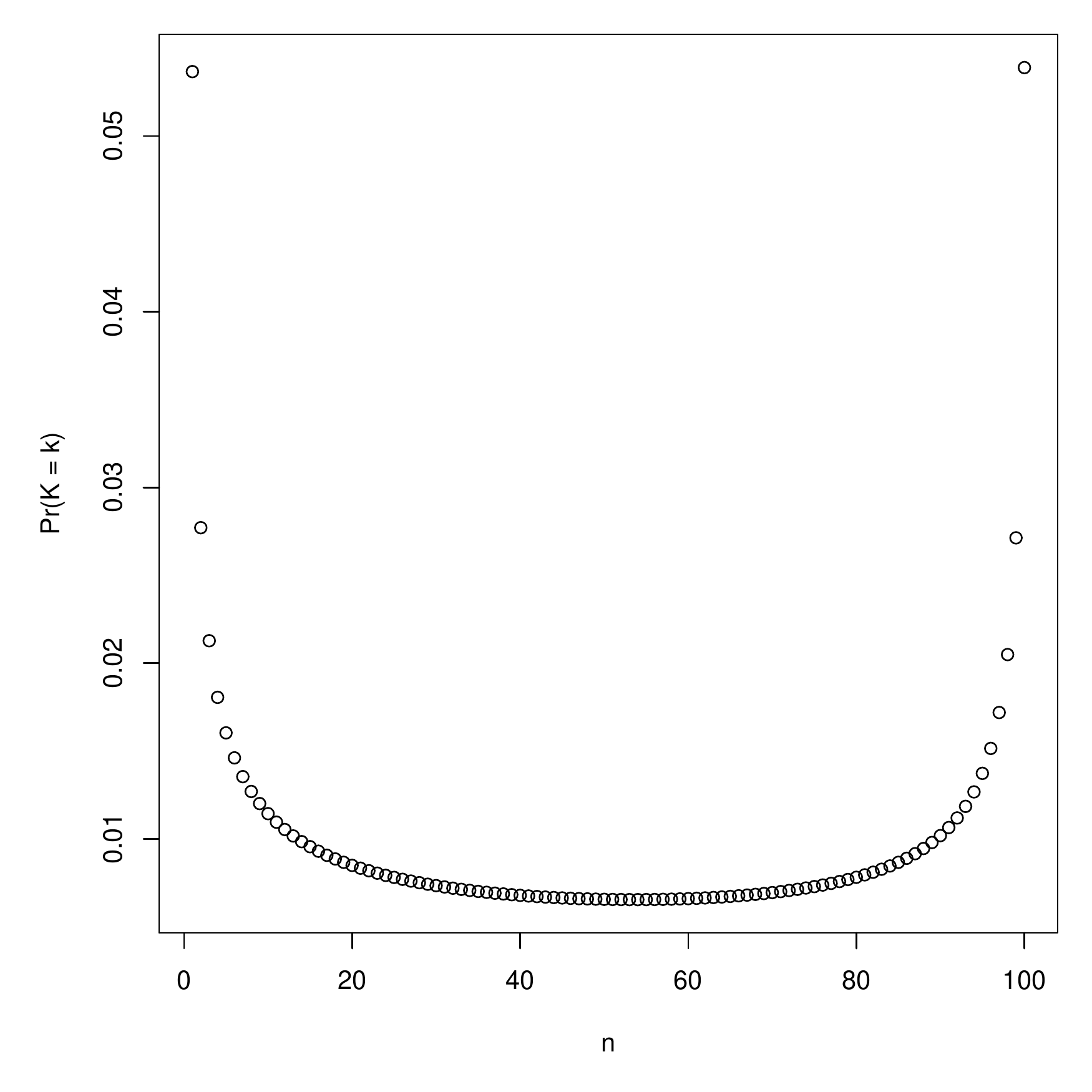} \label{fi:dist100}}
\caption{Left panel: expectation and standard deviation of the number of unique alleles, $E\{ K \mid n\}$ and $\V\{ K \mid n\}$ as a function of the sample size $n$.  Note that they both appear to grow linearly with $n$.  Right panel: full marginal distribution $p(K= k \mid n) = \int_{0}^{\infty} p(K=k \mid \beta)\pi_n^{J}(\beta) \dd \beta$ for $n=100$.} 
\end{figure}

\smallskip

Yet another direction is to consider the effect of $\pi_n^J(\beta)$ on the priori probability of discovery of a new species, $\eta = \beta/(\beta + n + 1)$.  In particular, Figures \ref{fi:expec_stdev} and \ref{fi:dist100} present graphs of the expected value and the variance of $\eta$,
\begin{align*}
\E \left\{ \eta \mid n \right\} &= \int_{0}^{\infty} \left( \frac{\beta}{\beta + n + 1} \right) \pi_n^{J}(\beta) \dd \beta  ,  &
\V \left\{ \eta \mid n \right\} = \int_{0}^{\infty} \left( \frac{\beta}{\beta + n + 1} \right)^2 \pi_n^{J}(\beta) \dd \beta -  \left\{ \int_{0}^{\infty} \left( \frac{\beta}{\beta + n + 1} \right) \pi_n^{J}(\beta) \dd \beta \right\}^2,
\end{align*}
as functions of $n$.  Note that the values of both of these summaries are quite stable, with the expected probability of discovery of a new species varying between 0.38 and 0.40 over the range considered here.

%
%
%
%
\begin{figure}[!h]
\centering  
{\includegraphics[width=.455\textwidth]{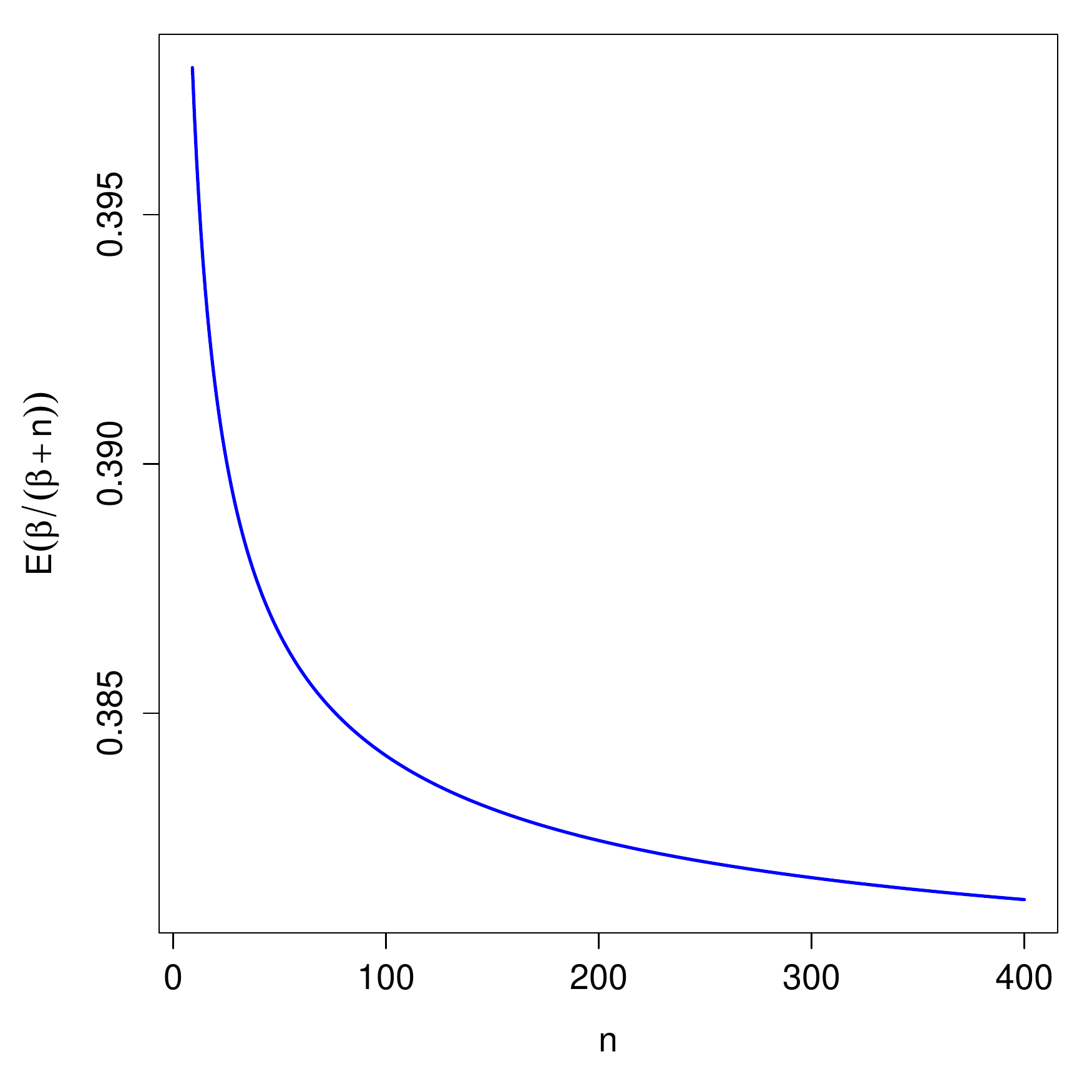} \label{fi:newspec_exp}}
{\includegraphics[width=.455\textwidth]{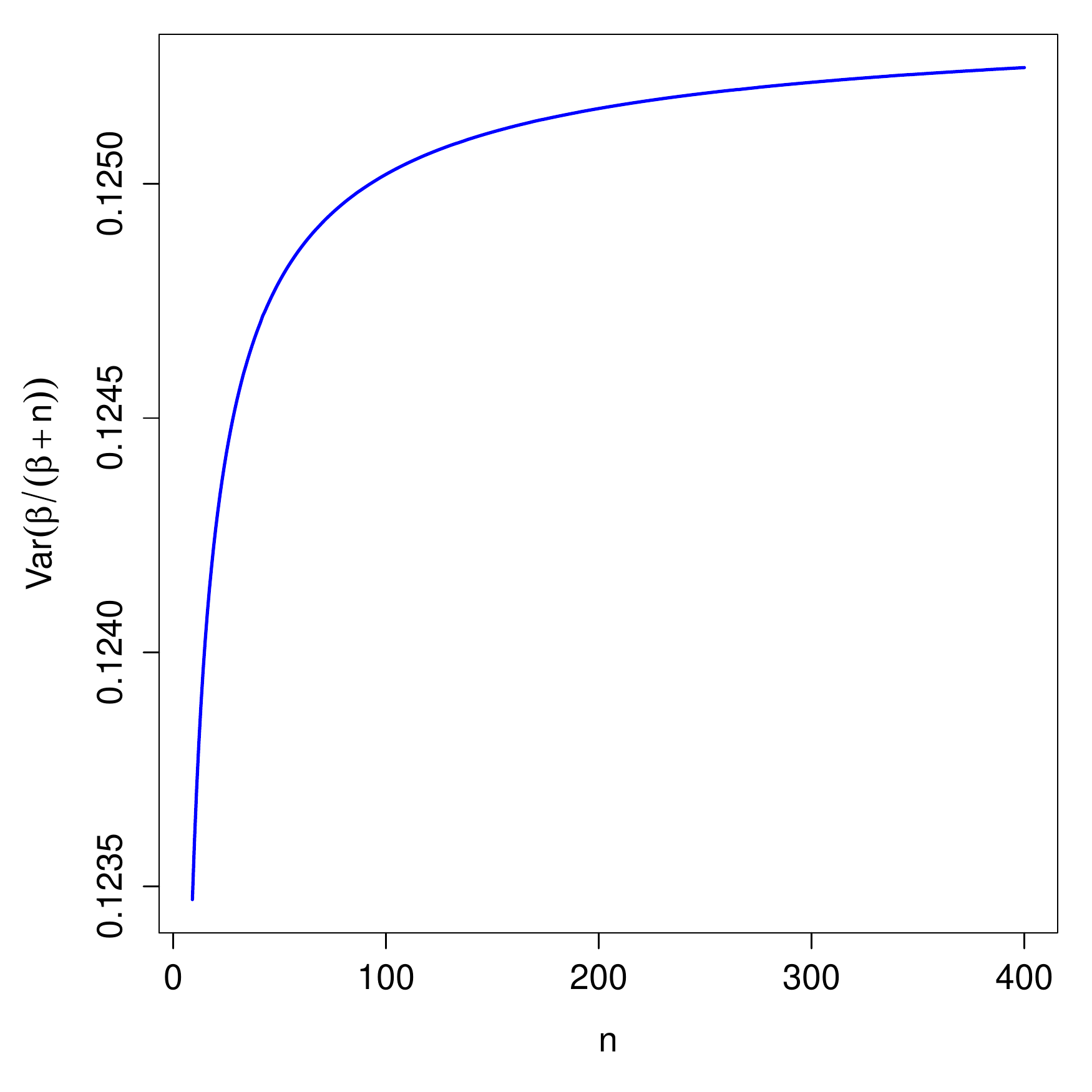} \label{fi:newspec_var}}
\caption{Expectation (left panel) and variance (right panel) of the probability of discovery a new species, $\E\left\{ {\beta}/{(\beta + n + 1)} \right\}$, as a function of the sample size $n$.} 
\end{figure}

\section{Illustrations}\label{se:illus}

\subsection{Species sampling models}\label{se:illustrationSSM}

In this subsection we consider the use of our Jeffreys prior for inference under the MED model.  We consider first a series of simulations where data is generated under a MED with true parameters $\beta_T = 1$ or $\beta_T = 2$.  We consider two different scenarios, corresponding to $n=100$ and $n=1,000$ individuals, and for each of these scenarios we study the frequentist properties of Bayesian interval estimators generated under two different default priors:  the Jeffreys priors derived in this paper, and a diffuse Gamma with shape parameter 0.001 and rate 0.001 (which has mean 1).  The choice of a diffuse Gamma prior centered around $\beta = 1$  reflects standard practice in the literature.

\smallskip

Markov chain Monte Carlo algorithms were used to obtain samples from the posterior distribution under each prior.  All inferences are based on 100,000 iterations of the chain obtained after a burn-in period of 5,000 iterations.  For the Jeffreys prior, we employed a random walk Metropolis-Hasting algorithm with Gaussian proposals for $\log\beta$; the variance of the proposal was $\tau^2=0.05$ for $n=1,000$ and $\tau^2=1$ for $n=100$, which resulted in average acceptance rates of 70\% and 60\% respectively.  For the Gamma prior, we employed the latent variable approach discussed in \cite{EsWe95}, which does not require any tunning parameter.

\begin{table}
\begin{tabular}{|c|c|c|l|cc|} \hline
\multirow{2}{*}{$\beta_T$}  &  \multirow{2}{*}{$n$}     &   \multirow{2}{*}{$\alpha$}   &   \multirow{2}{*}{Summary}   &  \multicolumn{2}{c|}{Priors}   \\
&       &      &      &  Jeffreys  & $\Gam\left(0.001, 0.001\right)$   \\ \hline\hline
\multirow{8}{*}{1}  &  \multirow{4}{*}{100}     &  \multirow{2}{*}{0.90} & Coverage                              &   0.89   & 0.86      \\ 
                                     &                                          &                                       & \multirow{1}{*}{Width}        &   1.85 (0.59)   &   1.74 (0.62)     \\ \cline{3-6}
                                     &                                          &  \multirow{2}{*}{0.95} & Coverage                             &  0.95    &   0.89    \\  
                                     &                                          &                                       & \multirow{1}{*}{Width}         &   2.24 (0.71)    &  2.11 (0.74)   \\  \cline{2-6}
                                     &  \multirow{4}{*}{1000}   &  \multirow{2}{*}{0.90} & Coverage                              &   0.90   &   0.86    \\ 
                                     &                                          &                                       & \multirow{1}{*}{Width}        &   1.40 (0.34)   &   1.34 (0.35)   \\ \cline{3-6}
                                     &                                          &  \multirow{2}{*}{0.95} & Coverage                             &   0.94     &  0.94     \\  
                                     &                                          &                                       & \multirow{1}{*}{Width}         &   1.67 (0.38)   &   1.61 (0.39)     \\  \hline \hline
\multirow{8}{*}{2}  &  \multirow{4}{*}{100}     &  \multirow{2}{*}{0.90} & Coverage                              &   0.91   & 0.86      \\ 
                                     &                                          &                                       & \multirow{1}{*}{Width}        &   2.84 (0.75)   &   2.74 (0.76)     \\ \cline{3-6}
                                     &                                          &  \multirow{2}{*}{0.95} & Coverage                             &  0.95    &   0.94    \\  
                                     &                                          &                                       & \multirow{1}{*}{Width}         &   3.43 (0.90)    &  3.32 (0.91)   \\  \cline{2-6}
                                     &  \multirow{4}{*}{1000}   &  \multirow{2}{*}{0.90} & Coverage                              & 0.91     &  0.88     \\ 
                                     &                                          &                                       & \multirow{1}{*}{Width}        &   2.07 (0.37)   &  2.02 (0.38)    \\ \cline{3-6}
                                     &                                          &  \multirow{2}{*}{0.95} & Coverage                             &    0.94    &   0.95    \\  
                                     &                                          &                                       & \multirow{1}{*}{Width}         &   2.48 (0.44)     &    2.42 (0.45)    \\  \hline
\end{tabular}
\vspace{0.2cm}
\caption{Results of a simulation study to explore the frequentist coverage and expected length of $100\alpha\%$ credible intervals for $\beta$ generated under three different priors.}\label{ta:simstudy1}
\end{table}


\smallskip

Table \ref{ta:simstudy1} presents empirical coverage probabilities and interval lengths for symmetric credible sets constructed.  These summaries were estimated on the basis of $2,000$ randomly generated datasets.  As expected, the frequentist coverage probability of 90\% and 95\% symmetric credible intervals under the Jeffreys prior seems to coincide with their nominal posterior probability.  On the other hand, the dispersed Gamma prior produces intervals that are somewhat tighter than the Jeffreys prior, but which they tend to have lower empirical coverage rates (particularly, for 90\% coverage).

\smallskip

As a second illustration, we consider a real dataset discussed in \cite{MaLi02}, \cite{Ma04}, and \cite{LiMePr07}.  The data consists of a $n=2586$ randomly selected expressed sequence tags taken form a large cDNA library made from the 0 mm  to 3 mm buds of tomato flowers.  The number of distinct tags observed in this sample is $K=1825$.  Table \ref{ta:tags} presents estimates of the MED parameter $\beta$ under the same two priors discussed in the simulation study.  The same MCMC algorithms described in the simulation study were used in this analysis.  Note that, although inferences for $\beta$ differ somewhat among the two priors, inferences for the probability of discovery of a new species, $\eta = \beta/(\beta+n + 1)$, are almost identical in both cases.
\begin{table}
\begin{tabular}{|l|cc|cc|} \hline
& \multicolumn{2}{c|}{$\beta$}  &  \multicolumn{2}{c|}{$\eta = \beta/(\beta + n + 1)$} \\ \cline{2-5} 
& Post. mean  &  95\% Cred. Interval   &   Post. mean  &  95\% Cred. Interval \\ \hline\hline
Jeffreys & 2763.3  &   (2526.9,  3015.8)  & 0.516  &   (0.494, 0.538) \\
$\Gam\left(0.001, 0.001\right)$  & 2751.3  &   (2518.1,  3002.4)   & 0.515  &   (0.493, 0.537) \\ \hline
\end{tabular}
\vspace{0.2cm}
\caption{Posterior credible inferences for $\beta$ and $\eta = \beta/(\beta + n + 1)$ (the probability of discovery of a new species) under two different ``non-informative'' prior distributions for the sequencing tag data.}\label{ta:tags}
\end{table}

\subsection{Dirichlet process mixture models}\label{se:illustrationDPM}

The Dirichlet process (DP) \citep{Fe73,An74,Se94} defines a prior distribution on the space of discrete measures and has been widely used in the context of nonparametric Bayesian inference.  However, because of the discrete nature of distributions, the DP is not typically used to model the data directly, but as a prior for the mixing distribution in a kernel convolution.  In that case, the data generating process for an independent and identically distributed sample $y_1, \ldots, y_n$ is assumed to be
\begin{align*}
y_i \mid G &\sim \int p(y_i \mid \theta) G(\dd \theta),   & G & \mid \beta \sim \DP(\beta, G_0),
\end{align*}
where $\DP(\beta, G_0)$ denotes a Dirichlet process prior with centering measure $G_0$ and precision parameter $\beta$, and $p(y_i | \theta)$ is a kernel indexed by the finite-dimensional parameter vector $\theta$.  Such a model can alternatively be described in terms of a series of partition indicators $\xi_1, \ldots, \xi_n$ and component-specific parameters $\vartheta_1, \vartheta_2, \ldots, $ such that
\begin{align*}
y_i \mid \xi_i, \vartheta_1, \vartheta_2, \ldots &\sim p(y_i \mid \vartheta_{\xi_i}) ,   & \xi_1, \ldots, \xi_n \mid \beta &\sim \MED(\beta)   ,    & \vartheta_k &\sim  G_0,
\end{align*}
where $\MED(\beta)$ represents the multivariate Ewens distribution with parameter $\beta$.

\smallskip

We ran two simulation studies to investigate the impact of the (marginal) Jeffreys prior on posterior inferences for the DP mixture model.  First, a dataset consisting of 50 observations was generated from a negative binomial distribution with mean 20 and variance 220, and a DP mixture of Poisson kernels with an unknown precision parameter $\beta$ and a Gamma baseline measure was fitted to this data (the baseline measure was selected so that it had mean 20 and variance 200).  Note that, because the negative binomial can be represented as a scale mixture of Poissons, the true data generating process corresponds to the limit of the Poisson DP mixture prior when $K= n=50$ (which can be obtained by letting $\beta \to \infty$).

\smallskip

We considered two different prior distribution for $\beta$, namely, the Jeffreys prior introduced in this paper and the ``non-informative'' $\Gam(0.001, 0.001)$.  A variant of the collapsed Gibbs samplers described in \cite{Ne00} was used to fit the model.  In the case of the Jeffreys prior, we used a version of the algorithm that integrates over $\beta$, so that the posterior full conditional distribution for $\xi_i$ is given by 
\begin{align}\label{eq:PU}
p(\xi_{i} = k \mid \xi_{1}, \ldots, \xi_{i-1}, \xi_{i+1}, \ldots, \xi_{n}) = \begin{cases}
 m^{-}_k \dfrac{ A(n, n, K^{-}) }{ A(n-1, n, K^{-}) } p(y_i | y^{-}, G_0) & k \le K^{-} \\
& \\
\dfrac{ A(n, n, K^{-}+1) }{ A(n-1, n, K^{-}) } p(y_i | G_0)  & k = K^{-}+ 1  , \\
\end{cases}
\end{align}
where $A(n, m, k)$ was defined in \eqref{eq:Ank} and the negative exponent denotes the appropriate quantities computed after excluding observation $i$.

\smallskip

We focused our analysis on the posterior distribution of the number of occupied mixture components $K$.  The posterior mean for $K$ was very similar in both cases, 9.1309 under the Jeffreys prior and 9.0468 under the diffuse Gamma prior, with configurations including more than 16 mixture components having negligible posterior probability.  Having a number of occupied clusters that is smaller than $n$ is not really surprising; because the Dirichlet process prior strongly favors clustering, we expect the model to underestimate the number of mixture components.  What is really interesting is that the Jeffreys prior seems to favor a larger number of clusters than the Gamma prior.  Indeed, the posterior distribution of $K$ under the Jeffreys prior seems to be stochastically greater than the posterior distribution under the Gamma prior (the same phenomena appeared when we repeated the simulation study with other datasets).  This suggest that the Jeffreys prior does a slightly better job at identifying the true number of components in this case.

\smallskip

Finally, in order to evaluate whether the previous behavior is due to a systemic bias in the Jeffreys prior towards larger values of $K$, we ran a similar experiment where data was generated instead from a Poisson distribution with mean 20.  Hence, in this case $K=1$ corresponds to the truth.  In this case, the posterior distribution for $K$ under both models was identical (up to Monte Carlo error), so systematic bias does not seem to be present.

\section{Concluding remarks}\label{se:conclusion}

To the best of our knowledge, this is the first derivation of the Jeffrey's prior associated with the MED.  Our numerical evaluations suggest that it might represent a reasonable default prior in situations where little prior information is available, including hierarchical models such as nonparametric mixture models based on the Dirichlet process.  However, the Jeffreys prior explicitly depends on the sample size observed.  Hence, any statistical procedures derived under this prior will depend on the stopping rule associated with the experiment; for example, the results will vary depending on whether data is analyzed sequentially or in batches.

\bibliographystyle{bka}
\bibliography{MED}

\end{document}